\newtheorem{thm}{Theorem}
\newenvironment{theorem}{\bf\begin{thm}\rm\em}{\end{thm}} 
\newtheorem{cor}[thm]{Corollary}
\newenvironment{corollary}{\bf\begin{cor}\rm\em}{\end{cor}} 
\newtheorem{lem}[thm]{Lemma}
\newenvironment{lemma}{\bf\begin{lem}\rm\em}{\end{lem}} 
\newtheorem{prop}[thm]{Proposition}
\newtheorem{rem}[thm]{Remark}
\newenvironment{remark}{\bf\begin{rem}\rm}{\end{rem}} 
\newcommand{\E}{\mathbf{E}}
\newcommand{\calI}{\mathcal{I}}
\newcommand{\calJ}{\mathcal{J}}
\newcommand{\calS}{\mathcal{S}}
\newcommand{\calL}{\mathcal{L}}
\newcommand{\Prob}{\mathbf{P}}
\newcommand{\Ind}{\mathbf{1}}
\newcommand{\R}{\mathbb{R}}
\newcommand{\calN}{\mathcal{N}}
\newcommand{\Lap}{\mathcal{L}}
\newcommand{\SINR}{\text{SINR}}
\title{\Huge  SINR-based $k$-coverage probability in cellular networks with arbitrary shadowing}
\author{{\bf H.P. Keeler}$^\dagger$, {\bf B. B{\l}aszczyszyn}$^\dagger$
and {\bf    M. K. Karray}$^\ast$}
\begin{document}

\maketitle

\begin{abstract}
We give numerically tractable, explicit integral expressions for the distribution of the signal-to-interference-and-noise-ratio (SINR) experienced by a typical user in the down-link channel from the $k$-th strongest base stations of a cellular network modelled by Poisson point process on the plane.  Our signal propagation-loss model comprises of a power-law path-loss function with arbitrarily distributed shadowing, independent across all base stations, with and without Rayleigh fading.  Our results are valid in the whole domain of SINR, in particular for $\SINR<1$, where one observes multiple coverage.  In this latter aspect our paper complements previous studies reported in~\cite{DHILLON2012}. 
\end{abstract}

\begin{keywords}
Wireless cellular networks, Poisson process, 
shadowing, fading, SINR, multiple coverage, symmetric sums.
\end{keywords}

\section{Introduction}
\let\thefootnote\relax\footnotetext{\hspace{-2ex}$^\dagger$Inria/Ens, 23
  av. d'Italie 75214 Paris, France\\ 
$^\ast$Orange Labs, 38/40 rue
G\'{e}n\'{e}ral Leclerc, 92794 Issy-Moulineaux, France}
\newcommand{\thefootnote}{\arabic{footnote}}
Shannon's theory and its modern extensions 
quantify  the quality of communications channels (ergodic capacity,
finite block errors, error exponents, etc.) in a probabilistic manner by
considering  averages over codewords and  channel characteristics
(noise, fading, etc.). For communication networks with  many channels,
it has been recently suggested to use a stochastic geometric
approach~\cite{HABDF:2009} consisting in taking 
spatial averages over node (emitter, receiver) locations.
Establishing clear connections between stochastic-geometric averages and basic
information-theoretic notions may be difficult 
(cf  e.g.~\cite{Piantanida2011}) but
this approach for wireless networks
has recently attracted a lot of attention. In particular, the
fundamental characteristic, discovered in information theory,  signal-to-noise-ratio
is now being studied in many geometric contexts with the incorporation of interference. This paper contributes to this approach by considering the distribution of 
the signal-to-interference-and-noise-ratio
(SINR) of a typical user on down-link channels from different base
stations of a single-tier cellular network modelled by Poisson point process on
the plane. In particular, it complements~\cite{DHILLON2012}, by
providing explicit characterization of low (less than one) values of
SINR.  Current cellular-network technology allows for 
effective use of such SINR regimes, whence our motivation
comes.

Cellular network models based on the Poisson point
process have been shown to give tractable and accurate
solutions~\cite{ANDREWS2011}, with the Poisson assumption
being  justified  by representing highly irregular base station deployments in urban
areas~\cite{ChiaHan_etal2012}  or mimicking strong log-normal
shadowing~\cite{hextopoi}, or  both.  Knowledge of the distribution of
$\mathrm{SINR}$ allows  to calculate  key performance
indicators of cellular networks, e.g. spectral 
efficiency~\cite{AlouiniGoldsmith1999}~\footnote{From~\cite[\S
  4.2.2]{BonaldBorstHegdeJP2009} 
we know that it represents the  critical
traffic demand per base station, beyond which the best-effort service
of variable-bit-rate traffic becomes unstable.}
or energy efficiency~\cite{Richter2009,hextopoi}. It can also be
(non-trivially) related to user-level quality-of-service metrics.

In studying the SINR of a given user 
with respect to the base station with the greatest received signal, the main difficulty for small values of SINR is taking
into account {\em multiple coverage}. Simple algebra shows
that there can be at most one base station offering a given user
$\SINR\geq1$, and hence the probability of having at least one station
covering at this level reduces to the sum of probabilities of
SINR-coverage over all base stations. However, this is not the case for
$\SINR<1$, where  one  needs to 
study probabilities of simultaneous coverage by several base stations.  
We express these probabilities via the so-called {\em symmetric sums}
and relate them to the appropriate partitioning of the SINR domain for
$\text{SINR}<1$, which are the main  ideas behind this paper.

Our  signal propagation-loss  model consists of the deterministic power-law
path-loss function and independent  (across all base stations), arbitrarily distributed shadowing.
As  previously observed~\cite{shadow_jnl,hextopoi}, 
any characteristic involving only the sequence of propagation-loss
values experienced  by a given user from all base stations entirely depend on
the distribution of the shadowing via its moment of order $2/\beta$,
where $\beta$ is the path-loss exponent.
We also study the impact of Rayleigh fading on the SINR coverage by
incorporating into the propagation-loss model additional independent (across all base stations)
random variables with exponential distributions. Assuming that fading
affects the  SINR coverage condition but not the choice of the serving 
base station, we observe that it worsens the coverage
particularly at small values of SINR.


\subsection*{Related work}
The SINR coverage in a multi-tier network
was studied in~\cite{DHILLON2012} for $\SINR\ge 1$.
Two different approaches to express the distribution of the SINR in
its whole domain, both involving inversion of Laplace transforms, were 
presented in~\cite{DHILLON2011,hextopoi}. 
Our expressions, which involve two key families of integrals 
(over the positive real line and a hyper-cube respectively) 
are much more tractable  in numerical evaluation.  
Our model with fading  was recently examined in~\cite{VU2012} under slightly
more general assumptions. We revisit it in order to 
present more closed-form expressions under our specific assumptions,
and compare the coverage probabilities obtained in it to these 
in the model  without fading.~\footnote{%
Which of these two models is more appropriate depends on the context.
For example if users are motionless, then the fading should be taken
into account, while highly mobile users ``see'' channel
characteristics averaged over fading, cf~\cite[Proposition~3.1]{Borst2003}.  
}

\section{Model description}
\label{s.Model}
On $\R^2$, we model the base stations with a homogeneous Poisson point process
 $\Phi$ with density $\lambda$. Given $\Phi$, 
let $\{S_x\}_{x\in\Phi}$ be a collection of
independent and identically distributed random
variables,  that 
represent the {\em shadowing} experienced between station $x\in\Phi$
and a typical user located, without loss of generality,
at the origin. Let $S$ denote a generic
shadowing variable. The distribution of $S$ is arbitrary
except for a technical assumption  $\E[S^{\frac{2}{\beta}}]<\infty$  
and a  conventional assumption  that $\E[S]=1$ that we make without
loss of generality.
\subsection{SINR multi-coverage}
\label{ss.SINRcoverage} 
We define the SINR of the typical
user with respect to the station $x\in\Phi$ by
\begin{equation}\label{SINR}
\SINR(x):=\frac{S_x/\ell(|x|)}{W+I-S_x/\ell(|x|)},
\end{equation}
where the constant $W$ is the noise power,
$I=\sum_{x\in\Phi}S_x/\ell(|x|)$ is the total power received from  the
entire  network, and the path-loss function is
\begin{equation}\label{PATHGAIN}
\ell(|x|)=(K|x|)^{\beta},
\end{equation}
with constants $K>0$ and $\beta>2$. 
In this paper we are interested in  the distribution of the 
\emph{coverage number} of the typical user defined as the
number of base stations that the typical user can connect to at the
SINR level $T$, namely
\begin{equation}
\calN(T)=\sum_{x\in\Phi}\Ind \left[\SINR(x)>T\right].
\end{equation}
The probability  of the typical user being covered by
at least $k$ base stations, which we call \emph{$k$-coverage probability}, is
\begin{equation}
P_c^{(k)}(T)
=\Prob\{\,\calN(T)\geq k\,\}.
\end{equation}
In particular, the {\em coverage probability} of the typical
user  is (\footnote{This notation is similar to that of~\cite{DHILLON2012},
which uses $\beta$ instead of $T$.})
$P_c(T):=P_c^{(1)}(T)$.

Since the function $x/(A-x)=A/(A-x)-1$ is increasing in $x$,
$P_c^{(k)}(T)$, as a function of $T$, is the 
tail-distribution function of the SINR experienced by the typical
user with  respect to the base station offering the 
 $k\,$th smallest
propagation-loss $Y_{k}$: $(Y_k)^{-1}/(W+I-(Y_k)^{-1})$, where
$Y_1<Y_2<\ldots$ is the process of order statistics  
of $\{\ell(|x|)/S_x: x\in\Phi\}$. In particular, 
$P_c(T)$ is the tail-distribution function of the SINR with respect to the 
base station with the smallest propagation-loss.

Related quantities of interest include also 
the {\em expected converge number}  $\E[\calN(T)]=
\sum_{n=0}^{\infty} n \Prob\{\,\calN(T)=n\,\}$
and its probability-generating function $G(z)=\E[z^{\calN(T)}]$.

\subsection{Adding fading to the model}
\label{ss.Fading_model}
In this extension of the previous 
model we assume that the propagation-loss
of each base station $x\in\Phi$ is
further modified by a random {\em fading} variable $F_x$ 
and equal to  $l(|x|)/(S_xF_x)$,
where given
$\Phi$,  $\{F_x\}_{x\in\Phi}$ is a collection 
of independent and identically distributed random
variables, independent of shadowing $\{S_x\}_{x\in\Phi}$. In this
paper we will assume Rayleigh fading, i.e., that the generic fading
variable $F$ is exponential, with $\E[F_x]=1$. 
A key assumption is that {\em fading perturbs the
SINR coverage condition but not the choice of the serving base
station}~\footnote{In other words, the user  compares the received signals 
averaged over fading effects, which is justified by 
short time and space coherence properties of the (multipath) fading.}.
In consequence, the {\em coverage probability under fading}
(with respect to the smallest fading-averaged-propagation-loss base station)
 is defined as 
\begin{equation}\label{e.Pc_fading}
\tilde P_c(T)=
\Prob\left\{\,\frac{(Y_1)^{-1}F}{W+I-(Y_1)^{-1}F}>T\,\right\}\,,
\end{equation}
where, recall,  $Y_1$ is the smallest propagation-loss received by the
typical user in the model without fading.

\section{Preliminary Observations}
\subsection{Invariance with respect to the shadowing distribution}
\begin{lemma}[Cf~\cite{shadow_jnl,hextopoi}]
\label{l.Yn}
The fading-averaged-propagation-loss process $(Y_n:n\ge
1)$, considered as a point process on the positive half-line $\mathbb{R}^{+}$
is a non-homogeneous Poisson point
process with intensity measure
$\Lambda\left(  \left[  0,t\right)  \right)  =at^{\frac{2}{\beta}}$
where
\begin{equation}\label{e.a}
a:=\frac{\lambda\pi
  \E[S^{\frac{2}{\beta}}]}{K^{2}}\,.
\end{equation}
\end{lemma}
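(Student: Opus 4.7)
The plan is to deduce this from two standard Poisson-point-process theorems: the marking theorem and the mapping theorem. First, I would regard $\{(x, S_x)\}_{x\in\Phi}$ as an independently marked Poisson point process on $\R^2 \times \R^+$. By the marking theorem, this is itself a Poisson point process with intensity measure $\lambda\, dx \otimes \mu_S(ds)$, where $\mu_S$ is the law of $S$. Next I would push this forward under the map $\varphi(x,s) = \ell(|x|)/s = (K|x|)^\beta/s$; by the mapping theorem the image is Poisson on $\R^+$, provided the push-forward measure is locally finite. Since the points $Y_n$ are just the ordered atoms of this image process, the conclusion will follow once the intensity measure is identified.

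The main work is to compute $\Lambda([0,t))$. Applying Campbell's formula to the marked process,
\begin{equation*}
\Lambda([0,t)) = \lambda\int_{\R^2}\P\!\left(\frac{(K|x|)^\beta}{S}<t\right)dx
= 2\pi\lambda\int_0^\infty r\,\P\!\left(S>\frac{(Kr)^\beta}{t}\right)dr.
\end{equation*}
Substituting $u=(Kr)^\beta/t$, so that $r\,dr = \frac{t^{2/\beta}}{K^2\beta}\,u^{2/\beta-1}\,du$, yields
\begin{equation*}
\Lambda([0,t)) = \frac{2\pi\lambda\,t^{2/\beta}}{K^2\beta}\int_0^\infty u^{2/\beta-1}\P(S>u)\,du
= \frac{\pi\lambda\,\E[S^{2/\beta}]}{K^2}\,t^{2/\beta} = a\,t^{2/\beta},
\end{equation*}
where the last line uses the identity $\int_0^\infty u^{p-1}\P(S>u)\,du = \E[S^p]/p$ with $p=2/\beta$.

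The finiteness assumption $\E[S^{2/\beta}]<\infty$ enters precisely here: it guarantees $\Lambda([0,t))<\infty$ for every $t$, so that the mapping theorem indeed produces a well-defined locally finite Poisson point process and the order statistics $(Y_n)$ are almost surely well defined. There is no real obstacle beyond correctly carrying out this substitution and invoking the right theorems; the only subtlety worth flagging is the local finiteness of the image, which is exactly what the moment hypothesis on $S$ ensures.
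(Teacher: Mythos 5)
Your proof is correct, and it follows essentially the same route as the cited sources (\cite{shadow_jnl,hextopoi}): the paper itself states Lemma~\ref{l.Yn} without proof, deferring to those references, where the argument is precisely the marking/mapping (displacement) theorem combined with the Campbell-formula computation of $\Lambda([0,t))$ that you carry out. Your substitution and the identity $\int_0^\infty u^{p-1}\P(S>u)\,du=\E[S^p]/p$ are applied correctly, and you rightly flag that $\E[S^{2/\beta}]<\infty$ is exactly what makes the image intensity locally finite.
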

Consequently, 
the distribution of $(Y_n:n\ge1)$, and hence the functions
$P_c^{(k)}(T)$ and   $\tilde P_c(T)$, depend on the model parameters
(including the shadowing $S$ distribution) only though the noise level
$W$, path-loss exponent $\beta$ and the constant $a$.\footnote{%
\label{foot:invariance}This means that evaluating our quantities  of interest in our model with a
general distribution of shadowing $S$ and some value of the
constant $K$, one can equivalently and for mathematical convenience 
assume  some particular distribution $\tilde S$
of shadowing, e.g. exponential or constant, and $\tilde K=1$ provided 
one replaces $\lambda$ by $\lambda (\tilde
K/K)^2\E[S^{\frac{2}{\beta}}]/\E[\tilde S^{\frac{2}{\beta}}]$ in the
obtained formula. 
We will use these two representations of the general 
models when evaluating $\tilde
P_c^{(k)}(T)$ and   $P_c(T)$, respectively, in Sections~\ref{ss.Results} and~\ref{ss.Results_fading}.}

\subsection{Symmetric sum representation}
For any given $T$  and $n\ge 1$ define the  $n\,$th  symmetric sum  
\begin{equation}\label{e.SS}
\calS_n(T):=\E\Bigl[\sum_{{x_1,\ldots,x_n\in\Phi\atop\text{distinct}}}
\Prob\{\,\SINR(x_i)>T,\,i=1,\ldots,n\,|\,\Phi\,\}\Bigr]\,,
\end{equation}
where $\Prob\{\,...\,|\,\Phi\,\}$ denotes the  conditional probability given
$\Phi$ (with random shadowing marks).
We set $\calS_0(T)\equiv 1$, and note that $\calS_n(T)$ is the expected number of ways that the typical user can connect to $n$ base stations  when there are $\calN(T)$ base stations each with a SINR greater than $T$.
 We have the following 
identities related to the famous 
inclusion-exclusion principle (cf e.g~\cite[IV.5 and
IV.3]{Fel68} for~(\ref{e.ss1}) and~(\ref{e.ss2}), respectively).%
\footnote{A general relation between the distribution of $\calN$
and the symmetric sums is given by the Schuette-Nesbitt formula,
often used in insurance mathematics, cf~\cite{GERBER:1995}.}
\begin{lemma}\label{l.SS-represent}
We have for  $k\ge1$ 
\begin{eqnarray}
 P_c^{(k)}(T)&=&\sum_{n=k}^{\infty}  (-1)^{n-k}{n-1\choose k-1}\calS_n(T)\,,\label{e.ss1}\\
\Prob\{\,\calN(T)=k\,\}&=&\sum_{n=k}^{\infty}  (-1)^{n-k}{n\choose k}\calS_n(T)\,,\label{e.ss2}\\
 \E[z^{\calN(T)}]&=&\sum\limits_{n=0}^{\infty} (z-1)^n \calS_n(T)\,,
\quad z\in[0,1]\,, \label{e.ss3}\\
 \E[\calN(T)]&=&\calS_1(T)\,.\label{e.ss4}
\end{eqnarray}
\end{lemma}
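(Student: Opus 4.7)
The plan is to first unpack the defining expression for $\calS_n(T)$ by the tower property, interpret it as an expected binomial coefficient of $\calN(T)$, and then deduce all four identities by the binomial theorem and standard inclusion--exclusion manipulations.

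First I would pull the inner conditional probability in~\eqref{e.SS} outside as an indicator, writing
\[
\calS_n(T)=\E\Bigl[\sum_{x_1,\dots,x_n\in\Phi\ \text{distinct}}\prod_{i=1}^n\Ind[\SINR(x_i)>T]\Bigr],
\]
via the tower property (shadowing marks integrate out). With the convention matching the claimed identities (unordered $n$-subsets of the covered stations), the sum inside equals $\binom{\calN(T)}{n}$, so
\[
\calS_n(T)=\E\!\binom{\calN(T)}{n}.
\]
The random variable $\calN(T)$ is a.s.\ finite: indeed, $\SINR(x)>T$ forces $S_x/\ell(|x|)>T(W+I)/(1+T)$, a fixed positive threshold given $(W,I)$, and only finitely many points of $\Phi$ can exceed it since $I<\infty$ a.s. Hence all infinite sums below reduce realization-wise to finite ones, and any interchange of summation and expectation is justified by dominated convergence with summable dominant.

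Granting $\calS_n(T)=\E\binom{\calN(T)}{n}$, identity~\eqref{e.ss4} is the case $n=1$. For~\eqref{e.ss3} I would apply the binomial theorem pointwise,
\[
z^{\calN(T)}=\bigl(1+(z-1)\bigr)^{\calN(T)}=\sum_{n=0}^{\calN(T)}\binom{\calN(T)}{n}(z-1)^n,
\]
and then take expectations. For~\eqref{e.ss2} I would match coefficients of $z^k$ in the same polynomial identity, which gives the pointwise inclusion--exclusion formula
\[
\Ind[\calN(T)=k]=\sum_{n\ge k}(-1)^{n-k}\binom{n}{k}\binom{\calN(T)}{n};
\]
(equivalently, use $\binom{m}{n}\binom{n}{k}=\binom{m}{k}\binom{m-k}{n-k}$ and $\sum_j(-1)^j\binom{m-k}{j}=0$ for $m>k$). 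Taking expectations yields~\eqref{e.ss2}.

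Finally, for~\eqref{e.ss1} I would start from $P_c^{(k)}(T)=\sum_{j\ge k}\Prob\{\calN(T)=j\}$, substitute~\eqref{e.ss2}, swap the order of summation, and reduce the inner sum via the standard alternating binomial identity
\[
\sum_{j=k}^{n}(-1)^{n-j}\binom{n}{j}
=\sum_{i=0}^{n-k}(-1)^i\binom{n}{i}
=(-1)^{n-k}\binom{n-1}{k-1},
\]
which follows from Pascal's rule by induction on $n-k$. Substitution gives the coefficient $(-1)^{n-k}\binom{n-1}{k-1}$ in~\eqref{e.ss1}. The only conceptual subtlety I anticipate is clarifying the convention on ``distinct $n$-tuples'' so that the sum produces exactly $\binom{\calN(T)}{n}$ (rather than $\calN(T)^{\underline{n}}$); everything else is a bookkeeping exercise with binomial identities.
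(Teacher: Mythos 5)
Your proof is correct. Note that the paper itself does not prove this lemma: it simply cites Feller [IV.5, IV.3] and the Schuette--Nesbitt formula, so what you have written is precisely the standard argument that those citations encapsulate, namely the identification $\calS_n(T)=\E\bigl[\binom{\calN(T)}{n}\bigr]$ followed by the binomial theorem and the two alternating-sum identities. Two small points are worth tightening. First, the ordered-versus-unordered convention you flag is real: the defining sum in~(\ref{e.SS}) must be read as ranging over unordered $n$-subsets (equivalently, ordered distinct tuples divided by $n!$) for the identities to hold, and this reading is confirmed later in the paper by the factor $(2\pi\lambda)^n/n!$ appearing when Campbell's formula is applied in the proof of Theorem~\ref{mainResult}. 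Second, your justification of the interchange of expectation and summation by ``dominated convergence with summable dominant'' is asserted rather than exhibited; almost-sure finiteness of $\calN(T)$ alone does not give $\E[2^{\calN(T)}]<\infty$. The clean fix is the observation underlying Lemma~\ref{l.Tdomain}: summing the coverage inequalities over the covering stations gives $\calN(T)\,T/(1+T)<1$, so $\calN(T)\le\lceil 1/T\rceil$ deterministically, every sum is a finite sum, and no convergence argument is needed. With that substitution your derivation is complete and matches what the cited references provide.
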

Our goal in Section~\ref{ss.Results} will be to evaluate symmetric sums
$\calS_n(T)$, which will allow us to express easily our  quantities of
interest appearing in the right-hand side of the above expressions.
Before doing this, in the following section we explain that 
the (apparently infinite) summations presented above boil down to finite
sums, as for any given $T$ we have  $\calS_n(T)=0$ for $n$ large enough.

\subsection{Partition of the $T$-domain}
For real $x$ denote by $\lceil x\rceil$  the ceiling of $x$ (the smallest integer not less than $x$).
\begin{lemma}\label{l.Tdomain}
For $n\ge1$, $\calS_n(T)=0$ whenever  $T\ge 1/(n-1)$.
In other words, one can replace
$\infty$ by $\lceil 1/T \rceil$  in the sums in
expressions given in Lemma~\ref{l.SS-represent}.
\end{lemma}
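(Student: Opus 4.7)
The plan is to prove the stronger pointwise statement: for every realization of $\Phi$ and shadowing marks, if $T \geq 1/(n-1)$ then no $n$ distinct stations can simultaneously satisfy $\SINR(x_i) > T$. Since $\calS_n(T)$ is the expectation of the conditional probability of precisely this event, $\calS_n(T)$ vanishes identically on that parameter range, and the truncation of all sums in Lemma~\ref{l.SS-represent} at $n = \lceil 1/T \rceil$ then follows immediately (the largest $n$ with $n - 1 < 1/T$ is exactly $\lceil 1/T \rceil$).

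Fix a realization and suppose, for contradiction, that distinct $x_1,\ldots,x_n \in \Phi$ all satisfy $\SINR(x_i) > T$. Writing $P_i := S_{x_i}/\ell(|x_i|)$, the SINR inequality rearranges (using the monotonicity of $x/(A-x)$ already noted after the definition of $P_c^{(k)}$) to
\[
P_i \;>\; \frac{T}{1+T}\,(W+I), \qquad i=1,\ldots,n.
\]
Now sum these $n$ inequalities and use the elementary bound $\sum_{i=1}^n P_i \leq I$, which holds because the $x_i$ are distinct members of $\Phi$ and $I$ sums $P_x$ over all of $\Phi$. This yields
\[
I \;\geq\; \sum_{i=1}^n P_i \;>\; \frac{nT}{1+T}\,(W+I) \;\geq\; \frac{nT}{1+T}\,I.
\]
If $I > 0$ this forces $1 > nT/(1+T)$, i.e.\ $T < 1/(n-1)$, contradicting the hypothesis. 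The degenerate case $I = 0$ forces $P_i = 0$, hence $\SINR(x_i) = 0$, which also contradicts $\SINR(x_i) > T$ (note that $T \geq 1/(n-1)$ with $n \geq 2$ gives $T > 0$; for $n = 1$ the hypothesis is vacuous).

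The argument is almost entirely algebraic, so there is no serious obstacle in the usual sense. The only step requiring a small insight is recognizing the right reformulation of the SINR inequality as a lower bound $P_i > (T/(1+T))(W+I)$ on each individual signal, and then pairing it with the conservation-type bound $\sum_i P_i \leq I$; once both observations are in hand, the contradiction is immediate. It is worth noting that nothing in the argument invokes the Poisson structure, the specific form of $\ell$, or the distribution of the shadowing, so the conclusion is a deterministic property of any SINR-type model with positive noise and non-negative interference.
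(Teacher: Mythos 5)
Your proof is correct and takes essentially the same route as the paper: the paper simply invokes the known SINR-cell intersection constraint $nT/(1+T)\le 1$ (citing Proposition~6.2 of the Baccelli--B{\l}aszczyszyn monograph, with strictness from noise/interference), whereas you supply its short elementary derivation by summing the rearranged inequalities $P_i>\tfrac{T}{1+T}(W+I)$ against the bound $\sum_{i=1}^n P_i\le I$. Nothing is missing.
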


\begin{proof}
This is stems from a well-known constraint of the SINR cell
intersection. If the SINR of a given user with respect to $n$ distinct
stations is to be larger than $T$, then $nT/(1+T)\le1$ (cf
\cite[Proposition~6.2]{FnT1}, with the strict inequality holding whenever
there are other (interfering) stations or external
noise, which is the case in our model. Hence,
for $T\ge 1/(n-1)$ all the terms (probabilities) in the $m$th symmetric
sum $\calS_m(T)$ in~(\ref{e.SS}) are null for every $m\ge n$.
\end{proof}
\section{Main results}
\subsection{Key integrals}
We now introduce two families of 
functions  which will allow us to express $\calS_n(T)$
and, in consequence, the multi-coverage characteristics in the model without fading.  
For $x\ge0$ define
\begin{equation}\label{In}
\calI_{n,\beta}(x)=\frac{2^n
\int_0^{\infty} u^{2n-1}e^{-u^2-u^\beta x\Gamma(1-2/\beta)^{-\beta/2}} du
}{\beta^{n-1}(C'(\beta))^n(n-1)!}
\end{equation}
where 
\begin{equation}
 C'(\beta)=\frac{2\pi}{\beta\sin(2\pi/\beta)}=
\Gamma(1-2/\beta)\Gamma(1+2/\beta).
\end{equation}

\begin{remark} 
We have 
\begin{equation}\label{InW0}
\calI_{n,\beta}(0)=  \frac{2^{n-1}}{\beta^{n-1}(C'(\beta))^n}\,.
\end{equation}
\end{remark}

The second family of functions are integrals over the hyper-cube. 
For $x\ge0$ define
\begin{equation}\label{e.Jn}
 \calJ_{n,\beta}(x)=\int_{[0,1]^{n-1}}
 \frac{   \prod\limits_{i=1}^{n-1}   v_i^{i(2/\beta+1)-1}(1-v_i)^{2/\beta}
  }{\prod\limits_{i=1}^{n-1} (x+\eta_i)} dv_1\dots
dv_{n-1}
\end{equation}
where 
$\eta_i:=(1-v_i)\prod_{k=i+1}^{n-1}v_k$.

\begin{remark}\label{r.hypergeometric}
For $\calJ_2$, a closed-form solution exists
\begin{align}
&\calJ_{2,\beta}(x)\\
&=\frac{B(2/\beta+1,2/\beta+1){}_2F_1(1,2/\beta+1;2(2/\beta+1);-1/x
)}{x},\nonumber
\end{align}
where   $B$ is the beta
function~\cite[eq. 5.12.1]{DLMF} and ${}_2F_1(a,b;c;z)$ is the
hypergeometric function given by~\cite[eq. 15.11]{DLMF} (whose
integral representation follows from eq. 15.1.2 and
15.6.1 therein).%
~\footnote{We note that the form of $\calJ_{n,\beta}$ is similar to
integral representations of the generalized hypergeometric function
and a related integral generalization
\cite{ULANSKII:2006}. A closed-form solution of
  $\calJ_{n,\beta}(x)$ may exist, but that is left as a future
  task. For low and intermediate $n$, regular numerical and Monte
  Carlo methods work well and  give results in a matter of seconds on
  a standard PC machine; cf~\cite{paul_matlab}.
 For high $n$, analysis of the kernel of $\calJ_n$ may lead to judiciously choosing suitable lattice rules, thus allowing for relatively fast integration \cite{kuo2005lifting}.}
\end{remark} 

\subsection{Results for the model without fading}
\label{ss.Results}
For $0<T<1/(n-1)$ define
\begin{equation}\label{Tn}
 T_n=\frac{T}{1-(n-1)T}.
\end{equation}

We now present the key result for the model without fading, which gives an explicit expression for the symmetric sums $\calS_n(T)$.
\begin{theorem}\label{mainResult}
Assume shadowing moment condition $\E(S^{2/\beta})<\infty$. 
Then 
\begin{equation}
\calS_n(T)=T_n^{-2n/\beta}  \calI_{n,\beta}(Wa^{-\beta/2})  \calJ_{n,\beta}(T_n)
\end{equation}
for $0<T<1/(n-1)$  and $\calS_n(T)=0$ otherwise, 
where $a$ is  given by~(\ref{e.a}) and $T_n$ by~(\ref{Tn}). 
\end{theorem}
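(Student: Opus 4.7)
The plan is to apply Campbell's formula for the factorial moment measure of the Poisson point process, then separate the resulting multiple integral into a noise-dependent part giving $\calI_{n,\beta}$ and a geometric part giving $\calJ_{n,\beta}$. By Lemma~\ref{l.Yn} it suffices to work with the one-dimensional propagation-loss process on $\R^+$ with intensity $\Lambda(dy)=\alpha a\,y^{\alpha-1}\,dy$, where $\alpha:=2/\beta$. Reading off $\calS_n(T)=\E[\binom{\calN(T)}{n}]$ from the generating-function identity~(\ref{e.ss3}) and using Slivnyak's theorem to absorb the conditioning on the selected points into an extra independent copy of the shot-noise, Campbell's formula yields
\[ \calS_n(T) = (\alpha a)^n\int_{0<y_1<\cdots<y_n}\Prob\{\SINR(y_i)>T,\,\forall i\}\prod_{i=1}^n y_i^{\alpha-1}\,dy_i, \]
where the probability is taken over an independent copy of the interference $I'$ from the remaining PPP. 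Since $\SINR(y)$ is monotone in $1/y$, the $n$ inequalities collapse to the single binding one $1/(Ty_n)-\sum_{j<n}1/y_j > W+I'$.

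To separate scales I rescale $\tilde y_i:=y_i/y_n$ for $i<n$ and pass to ratios $v_i:=\tilde y_i/\tilde y_{i+1}$ (with $\tilde y_n:=1$), giving $(v_1,\ldots,v_{n-1})\in(0,1)^{n-1}$. A short Jacobian computation yields $\prod_i y_i^{\alpha-1}dy_i = y_n^{n\alpha-1}\prod_k v_k^{k\alpha-1}\,dv\,dy_n$, and the binding inequality becomes $\Psi(v)/y_n > W+I'$ with $\Psi(v) = 1/T - \sum_{i=1}^{n-1}\prod_{k=i}^{n-1} v_k^{-1}$; positivity of $\Psi$ holds only on a strict subset of the cube, encoding the restriction $T<1/(n-1)$ of Lemma~\ref{l.Tdomain}. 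The substitution $u=\Psi/y_n-W$ turns the $y_n$-integral into $\Psi^{n\alpha}\,L(W)$, with $L(W):=\int_0^\infty(u+W)^{-n\alpha-1}\Prob\{I'<u\}\,du$. Integration by parts gives $L(W)=\E[A^{-n\alpha}]/(n\alpha)$ where $A:=W+I'$; then the Gamma-representation $\E[A^{-n\alpha}]=\Gamma(n\alpha)^{-1}\int_0^\infty z^{n\alpha-1}\E[e^{-zA}]dz$ together with the Laplace transform $\E[e^{-zA}]=\exp(-zW-a\Gamma(1-\alpha)z^\alpha)$ and the substitution $z=u^\beta(a\Gamma(1-\alpha))^{-\beta/2}$ identify the $z$-integral with the kernel defining $\calI_{n,\beta}(Wa^{-\beta/2})$. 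After collecting Gamma factors via $\alpha\beta=2$ and $C'(\beta)=\Gamma(1-\alpha)\Gamma(1+\alpha)$, this produces $(\alpha a)^n L(W) = \frac{\Gamma(1+\alpha)^n(n-1)!}{\Gamma(n\alpha+1)}\,\calI_{n,\beta}(Wa^{-\beta/2})$.

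The remaining and hardest step is the hypercube identity
\[ \int_{[0,1]^{n-1}}\prod_{k=1}^{n-1} v_k^{k\alpha-1}\,\Psi(v)^{n\alpha}\,\Ind\{\Psi>0\}\,dv = \frac{\Gamma(n\alpha+1)}{\Gamma(1+\alpha)^n(n-1)!}\,T_n^{-n\alpha}\,\calJ_{n,\beta}(T_n), \]
which converts the single power $\Psi^{n\alpha}$ into the product denominator $\prod_i(T_n+\eta_i)$ and teases out the factors $v_i^i(1-v_i)^\alpha$ appearing in $\calJ_{n,\beta}$. The plan is to pass to the gap variables $\eta_i:=(1-v_i)\prod_{k>i}v_k$ (under which $q_i:=\prod_{k\geq i}v_k=1-\sum_{k\geq i}\eta_k$, and $\Psi$ becomes a function on the simplex $\{\eta_i>0,\sum_i\eta_i<1\}$), then apply the Beta-type representation $(T_n+\eta_i)^{-\alpha-1}=\Gamma(\alpha+1)^{-1}\int_0^\infty t_i^\alpha e^{-(T_n+\eta_i)t_i}\,dt_i$ to the $\calJ_{n,\beta}$ side, swap the order of integration, and verify that the resulting simplex integral in the $\eta$'s reproduces $\Psi^{n\alpha}$ with exactly the required Gamma constants. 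The case $T\ge 1/(n-1)$, where $\calS_n(T)\equiv 0$, is already handled by Lemma~\ref{l.Tdomain}, and the case $n=2$, where $\calJ_{2,\beta}$ reduces to the explicit hypergeometric of Remark~\ref{r.hypergeometric}, provides a convenient direct check.
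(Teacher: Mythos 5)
Your route is genuinely different from the paper's. The paper first specializes to exponential shadowing (licensed by Lemma~\ref{l.Yn} and Footnote~\ref{foot:invariance}), so that the joint coverage event factorizes, via the memoryless property of the minimum of exponentials, into a product of three Laplace transforms, and only then separates radial and angular parts through the spherical-like coordinates of the appendix. You instead work directly on the one-dimensional propagation-loss process, observe that the $n$ constraints reduce to the single binding one at the weakest point $y_n$, and extract $\calI_{n,\beta}$ by an integration-by-parts/Mellin argument on $\E[(W+I')^{-n\alpha}]$ (your $\alpha=2/\beta$). That whole first half is correct and arguably cleaner than the paper's: I checked the Jacobian $\prod_i y_i^{\alpha-1}dy_i=y_n^{n\alpha-1}\prod_k v_k^{k\alpha-1}\,dv\,dy_n$, the reduction $L(W)=\E[(W+I')^{-n\alpha}]/(n\alpha)$ (you should note $\Prob\{I'\le0\}=0$ to kill the boundary term), and the Gamma bookkeeping leading to $(\alpha a)^nL(W)=\Gamma(1+\alpha)^n(n-1)!\,\calI_{n,\beta}(Wa^{-\beta/2})/\Gamma(n\alpha+1)$; all constants come out right, and the $n=1$ and $n=2$ cases of the assembled statement check out. (One normalization to make explicit: you are using $\calS_n(T)=\E\bigl[\binom{\calN(T)}{n}\bigr]$, so the $n!$ from Campbell's formula over ordered tuples must cancel against the restriction to $y_1<\cdots<y_n$; this is consistent with the paper's~(\ref{e.Expression1}) but worth a sentence.)

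The genuine gap is the hypercube identity equating $\int_{[0,1]^{n-1}}\prod_k v_k^{k\alpha-1}\Psi(v)_+^{n\alpha}\,dv$ with $\Gamma(n\alpha+1)\,T_n^{-n\alpha}\calJ_{n,\beta}(T_n)/(\Gamma(1+\alpha)^n(n-1)!)$. This is not a finishing detail: it is exactly where the specific form of $\calJ_{n,\beta}$ --- the product denominator $\prod_i(T_n+\eta_i)$ and the weights $v_i^{i(2/\beta+1)-1}(1-v_i)^{2/\beta}$ --- has to emerge, i.e., it carries the same weight as the paper's entire appendix computation. The identity does appear to be true (for $n=2$ it reduces to a Beta/hypergeometric evaluation that I verified), but your proposed mechanism does not go through as written: you invoke the representation $(T_n+\eta_i)^{-\alpha-1}=\Gamma(\alpha+1)^{-1}\int_0^\infty t_i^\alpha e^{-(T_n+\eta_i)t_i}dt_i$, whereas $\calJ_{n,\beta}$ contains only the first powers $(T_n+\eta_i)^{-1}$, so after swapping the order of integration you will not land on a simplex integral reproducing the single power $\Psi(v)^{n\alpha}$ without a further, unexplained step. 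The natural tool for splitting $\Psi_+^{n\alpha}$ into $n$ pieces of order $\alpha$ is a Liouville--Dirichlet formula, and matching its output to the first-power product $\prod_i(T_n+\eta_i)^{-1}$ together with the numerators $(1-v_i)^{2/\beta}$ is precisely the nontrivial content you have deferred. Until that identity is proved, the argument is incomplete.
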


Theorem~\ref{mainResult} in conjunction of Lemma~\ref{l.SS-represent} 
and Lemma~\ref{l.Tdomain} give us in particular the following
expression for the $k$-coverage probability:
\begin{corollary}
Under the assumptions of Theorem~\ref{mainResult}
$$P_c^{(k)}(T)\\
=\sum_{n=k}^{\lceil 1/T\rceil}  \scriptstyle{(-1)^{n-k}{n-1\choose  k-1}}
T_n^{-2n/\beta}  \calI_{n,\beta}(Wa^{-\beta/2})  \calJ_{n,\beta}(T_n)\,,
$$
\end{corollary}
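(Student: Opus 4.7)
The Corollary is a direct algebraic consequence of three results already established in the excerpt, so the plan is essentially a single substitution rather than any new probabilistic argument. First, I would invoke Lemma~\ref{l.SS-represent} equation~(\ref{e.ss1}) to write
\[
 P_c^{(k)}(T)\;=\;\sum_{n=k}^\infty (-1)^{n-k}\binom{n-1}{k-1}\calS_n(T).
\]
At this point the entire stochastic-geometric content has been packaged into the symmetric sums $\calS_n(T)$, and what remains is bookkeeping.

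Next, Lemma~\ref{l.Tdomain} tells me that $\calS_n(T)=0$ whenever $T\ge 1/(n-1)$, equivalently whenever $n\ge 1+1/T$. A quick case-check, depending on whether $1/T$ is an integer, shows that the first vanishing index is $\lceil 1/T\rceil+1$, so the (a~priori infinite) sum terminates at $n=\lceil 1/T\rceil$, as already noted in the statement of Lemma~\ref{l.Tdomain}. Finally, for every surviving index $k\le n\le\lceil 1/T\rceil$ one is in the regime $0<T<1/(n-1)$ where Theorem~\ref{mainResult} applies, so substituting the closed form $\calS_n(T)=T_n^{-2n/\beta}\calI_{n,\beta}(Wa^{-\beta/2})\calJ_{n,\beta}(T_n)$ term by term yields exactly the displayed expression. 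There is no real obstacle in this proof: all of the difficulty is hidden inside Theorem~\ref{mainResult}, and the Corollary itself is a one-line substitution into the inclusion–exclusion sum of Lemma~\ref{l.SS-represent}.
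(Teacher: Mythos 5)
Your proposal is correct and matches the paper's own (implicit) argument exactly: the paper derives the Corollary by combining identity~(\ref{e.ss1}) of Lemma~\ref{l.SS-represent} with the truncation at $\lceil 1/T\rceil$ from Lemma~\ref{l.Tdomain} and substituting the expression for $\calS_n(T)$ from Theorem~\ref{mainResult}. Nothing further is needed.
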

The special case $k=1$, for $T\geq1$  reduces this to
expression~\cite[eq. (25)]{hextopoi}, which is in turn a special case
of~\cite[eq. (2)]{DHILLON2012} for a  single-tier network.

\begin{proof}[Proof of Theorem~\ref{mainResult}]
Assume $0<T<1/(n-1)$ (otherwise the result follows from
Lemma~\ref{l.Tdomain}).  
Following the remark in Footnote~\ref{foot:invariance} we
will first evaluate $\calS_n(T)$ assuming exponential distribution of $S$
and $K=1$, and then bring back the general assumptions appropriately
rescaling the Poisson intensity $\lambda$.
By the (higher order)  Campbell's formula and the Slivnyak's theorem 
(see \cite[(9.10) and~(9.16)]{FnT1}) and a simple algebraic manipulation
\begin{align}\label{e.Expression1}
\calS_n(T)&=
\frac{(2\pi\lambda)^n}{n!}\!\!\!\!\int\limits_{(\R^+)^n}\!\!\!\!
\Prob\Bigl\{\,
\bigcap_{i=1}^{n}(\SINR'(r_i)>T')\,\Bigr\}r_1dr_1\dots r_ndr_n 
\end{align}
where $\SINR'(r_i):=\frac{S_i/\ell(r_i)}{(W+I+\sum_{j=1}^n
S_i/\ell(r_i))}$
and $T':=T/(1+T)$
with $I$ is as in~(\ref{SINR}) and $S_i$ 
exponential (mean-1) variables, mutually independent and independent
of~$I$. 
Moreover, the event whose probability is calculated in~(\ref{e.Expression1})
is equivalent to 
\begin{align}\label{probctprime}
\Bigl\{\,
\frac{\min\left(S_1/\ell(r_1),\dots,S_n/\ell(r_n)\right)}{W+I+\sum_{j=1}
^nS_j/\ell(r_j)}
>T'\,\Bigr\}\,.
\end{align}

For integer $i\in[1,n]$  denote $E_i:=S_i/\ell(r_i)$.
By our previous assumption $E_i$ are independent exponential variables 
with means $1/\mu_i=1/\ell(r_i)$, respectively.
Let $E_{M}:=\min(E_1,E_2,\dots,E_n)$. Note that  $E_{M}$
is exponential variable with mean $1/\mu_M=1/(\sum_{i=1}^n \mu_i) $.
Moreover, define the random variable
$D:=\sum_{i=1}^n E_i -nE_{M}$. By the 
 memory-less property of the exponential distribution note 
the random variable $D$ is independent of $E_M$ and
has  a mixed exponential distribution
characterized by its Laplace transform
\begin{equation}\label{e.LTD}
 \Lap_{{D}}(\xi)
=\frac{\prod_{i=1}^{n}\mu_i}{\sum_{i=1}^{n}\mu_i} 
\,\sum_{i=1}^n \frac{1}{\prod_{j=1,j\neq i}^n (\mu_j+\xi
)}.
\end{equation}
Using the new random variables we can express the
 event~(\ref{probctprime}) as 
$\{\,
E_M>T_n(W+I+D)\,\}$
where $T_n$ is given by~(\ref{Tn}).
Consequently, the probability $\Prob\{\ldots\}$ 
calculated in~(\ref{e.Expression1})
is equal to 
$\Lap_{W}(\mu_{M}T_n)\Lap_I(\mu_{M}T_n)\Lap_{{D}}(\mu_{M}T_n)$,
which is a product of three Laplace transforms. The first transform  is simply
$\Lap_{W}(\xi)=e^{-W\xi}$, the second 
can be shown (\cite[equation 2.25]{FnT1}) to be 
$\Lap_I(\xi)=e^{-\lambda\xi^{2/\beta}\pi C'(\beta)/K^2}$
while the last one is given in~(\ref{e.LTD}).
After substituting the explicit path-gain function (\ref{PATHGAIN}),
noting that there is some symmetry in the integration variables $r_i$,
changing the integration variables $s_i:=r_i(\lambda T_n^{2/\beta}\pi
C'(\beta))^{1/2}$ and  replacing   $\lambda$ by $a/(\pi \Gamma(1+2/\beta))$ to
revoke the exponential shadowing assumption and bring back  the
general distribution of shadowing and constant $K$ (cf
Footnote~\ref{foot:invariance}) one obtains
\begin{align}
&\calS_n(T)=\frac{2^n}{T_n^{2n/\beta}(C'(\beta))^n(n-1)!}
\int_0^{\infty}\!\!\!\!\!\!\!\dots\!\!\!\int_0^{\infty}\\
&\frac{e^{-(\sum\limits_{i=1}^{n} s_i^\beta)^{2/\beta}}\left(\prod\limits_{i=1}^{n} s_i^{\beta+1} e^{-W(a\Gamma(1-2/\beta))^{-\beta/2}s_i^{\beta}}
\right) }{\left(\sum\limits_{i=1}^{n} s_i^{\beta}
\right)\left(\prod\limits_{i=2}^n [s_i^{\beta}+T_n\sum\limits_{k=1}^{n}
s_k^{\beta}
] \right)} ds_1\dots ds_n. \label{intgralHn1}
\end{align}

A substitution of $n$-dimensional spherical-like variables (detailed
in the appendix, Section \ref{mainResultProofPart2})
completes the proof.
\end{proof}

\subsection{Effects of Rayleigh fading}
\label{ss.Results_fading}
We now consider the model with fading; cf
Section~\ref{ss.Fading_model}.
\begin{theorem}\label{t.with-fading}
The coverage probability under fading (defined in~(\ref{e.Pc_fading})
is equal to 
\begin{align}\label{e.tPc}
\tilde P_c(T)
&=\frac{2}{\beta}\int_0^{\infty}  t^{\frac{2}{\beta}-1}  e^{-tTWa^{-\beta/2}}   e^{- t^{\frac{2}{\beta}} }    \\
&\times \exp\left(-\frac{2}{\beta} \frac{  T t^{2/\beta }
{}_2F_1(1,1-2/\beta;2-2/\beta;-T )}{(1-2/\beta) 
} \right)  dt\,, \nonumber
\end{align}
where, again, ${}_2F_1(a,b;c;z)$ is the hypergeometric function
mentioned in Remark~\ref{r.hypergeometric} above.   
\end{theorem}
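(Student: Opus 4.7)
The plan is to express $\tilde P_c(T)$ as a Laplace-functional integral against the Poisson process $(Y_n)_{n\ge1}$ of Lemma~\ref{l.Yn}, equipped with i.i.d.\ $\mathrm{Exp}(1)$ fading marks $(F_n)_{n\ge1}$ (independent of the $Y_n$'s by the model assumptions). The received-power-with-fading from the $n$-th-closest-in-propagation-loss base station equals $F_n/Y_n$, so the serving station corresponds to $Y_1$, the signal is $F_1/Y_1$, and the interference-with-fading is $\sum_{n\ge2}F_n/Y_n$. The event in~(\ref{e.Pc_fading}) becomes
$$\Bigl\{F_1>TY_1\Bigl(W+\sum_{n\ge2}F_n/Y_n\Bigr)\Bigr\},$$
and integrating out $F_1$ via its exponential tail gives
$$\tilde P_c(T)=\E\Bigl[e^{-TY_1 W}\exp\Bigl(-TY_1\sum_{n\ge2}F_n/Y_n\Bigr)\Bigr].$$

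Next, conditionally on $(Y_n)$, I would integrate out the $F_n$ for $n\ge2$ using $\E[e^{-sF}]=(1+s)^{-1}$, arriving at $\tilde P_c(T)=\E\bigl[e^{-TY_1 W}\prod_{n\ge2}(1+TY_1/Y_n)^{-1}\bigr]$. Then I condition on $Y_1=y$: the density of $Y_1$ is $(2a/\beta)y^{2/\beta-1}e^{-ay^{2/\beta}}$, and by the Poisson property the remaining points $\{Y_n:n\ge2\}$ form an independent Poisson process on $(y,\infty)$ with the same intensity $\Lambda(dt)=(2a/\beta)t^{2/\beta-1}dt$. The probability-generating functional of this Poisson process applied to $t\mapsto(1+Ty/t)^{-1}$ produces
$$\exp\Bigl(-\int_y^\infty\frac{Ty}{t+Ty}\,\frac{2a}{\beta}t^{2/\beta-1}\,dt\Bigr)=\exp\Bigl(-\frac{2a}{\beta}y^{2/\beta}\int_1^\infty\frac{Tu^{2/\beta-1}}{u+T}du\Bigr)$$
after the rescaling $u=t/y$. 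The outer change of variable $t=a^{\beta/2}y$ then converts the prefactor, noise exponential and ``survival'' factor respectively into $(2/\beta)t^{2/\beta-1}dt$, $\exp(-tTWa^{-\beta/2})$ and $\exp(-t^{2/\beta})$, matching the form displayed in~(\ref{e.tPc}).

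The one genuine calculation, which I treat as the main (though still routine) obstacle, is identifying the interference integral with the hypergeometric function. Substituting $u=1/s$ transforms
$$\int_1^\infty\frac{Tu^{2/\beta-1}}{u+T}du=T\int_0^1\frac{s^{-2/\beta}}{1+Ts}ds,$$
and the right-hand side is exactly the Euler integral representation of ${}_2F_1(1,1-2/\beta;2-2/\beta;-T)$, whose Beta-prefactor $\Gamma(2-2/\beta)/[\Gamma(1-2/\beta)\Gamma(1)]$ equals $1-2/\beta$; therefore this integral equals $T\,{}_2F_1(1,1-2/\beta;2-2/\beta;-T)/(1-2/\beta)$. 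Substituting into the outer integral gives exactly~(\ref{e.tPc}).
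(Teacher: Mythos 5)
Your proof is correct and follows essentially the same route as the paper's: condition on $Y_1$, integrate out the exponential fading of the serving station to produce the Laplace transforms of the noise and of the conditional interference $\sum_{n\ge 2}F_n/Y_n$, evaluate the latter via the Poisson exponential formula over $(y,\infty)$, and rescale $t=a^{\beta/2}y$. Your explicit reduction of the interference integral to the Euler representation of ${}_2F_1(1,1-2/\beta;2-2/\beta;-T)$ is a correct spelling-out of a step the paper merely asserts.
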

\begin{remark}
The expression~(\ref{e.tPc})  can be easily evaluated
numerically. 
Setting $W=0$ yields an analytic solution
\begin{equation}
\tilde P_c(T)=\left[1+ \frac{2}{\beta} \frac{  {}_2F_1(1,1-2/\beta;2-2/\beta;-T )}{(1-2/\beta) }T\right]^{-1}.
\end{equation}
\end{remark}
\begin{proof}[Proof of Theorem~\ref{t.with-fading}]
We use the propagation-loss process representation $\{Y_n\}$ 
defined in Section~\ref{ss.SINRcoverage} (which does account for
arbitrary general shadowing, but not for fading), which we enrich
by independent exponential marking $F_n$ representing Rayleigh fading. 
By Lemma~\ref{l.Yn} $\{(Y_n,F_n):n\ge1\}$ is  independently marked Poisson
point process of intensity $\Lambda(\cdot)$. Using this representation 
we can express the coverage probability of~(\ref{e.Pc_fading}) as follows
\begin{align}
\tilde P_c(T)&= \int_0^{\infty} \Prob\Bigl\{\,F_1 \geq
sT(W+I_{(s,\infty)}) \,\Bigr\}
f_{Y_1}(s)\,ds\,\nonumber\\
&=\int_0^{\infty} \calL_W(sT)\calL_{(s,\infty)}(sT)
f_{Y_1}(s)\,ds\,,\label{e.tPc-intermediate}
\end{align}
where $f_{Y_1}(s)$ is the probability density of $Y_1$, known to be
(due to Poissonianity of $\{Y_n\}$) 
$f_{Y_1}(ds)=-\frac{d}{ds}e^{-\Lambda(s)}=2a/\beta s^{2/\beta-1}$
and where $I_{(s,\infty)}$ is the random variable representing 
conditional interference (accounting for shadowing and fading) given
$Y_1=s$. Again, it is well known that $I_{(s,\infty)}$
is equal in distribution to $\sum_{Y_n>s}F_n/Y_n$ and has 
the Laplace transform which can be explicitly evaluated  as follows
\begin{align}
 \Lap_{I_{(s,\infty)}}(\xi) &=\exp\left(-\int_{s}^{\infty}  [1-\Lap_F(\xi/v)] 
\Lambda(dv)  \right)\\
&=\exp\left(-\frac{2a}{\beta}  \frac{ \xi s^{2/\beta}}{s} \frac{ 
{}_2F_1(1,1-2/\beta;2-2/\beta;-\xi/s }{(1-2/\beta) 
}  \right).
\end{align}
Plugging into~(\ref{e.tPc-intermediate}) and substituting $t=sa^{\beta/2}$ completes the proof.
\end{proof}

\begin{figure}[t!]
\begin{center}
\centerline{\includegraphics[width=1\linewidth]{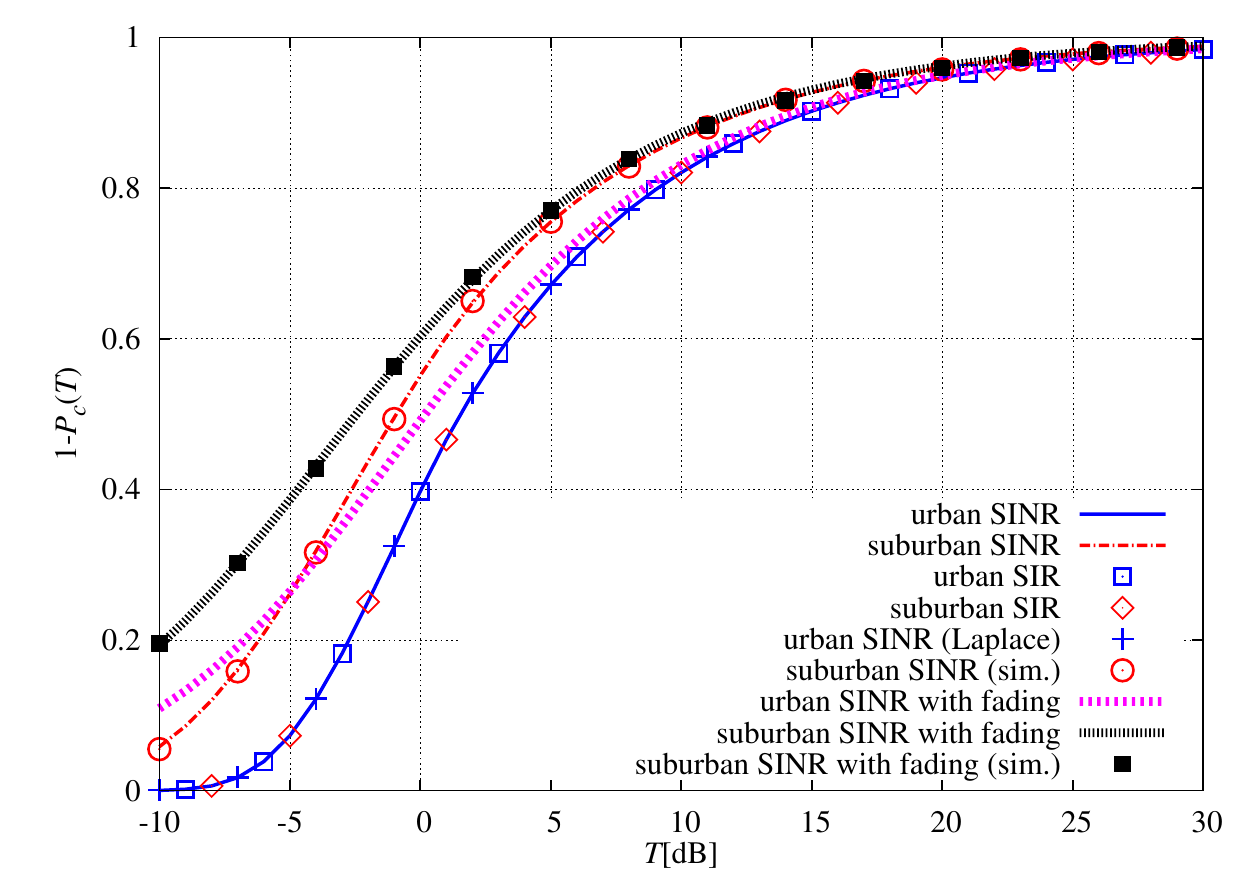}}
\vspace{-1ex}
\caption{Distribution function of SINR  from the strongest base
  station with and without fading validated by  the Laplace inversion method and simulation.
\label{f.SINR}}
\end{center}
\vspace{-3ex}
\end{figure}

\section{Numerical illustrations}
We use MATLAB implementation~\cite{paul_matlab} for all our calculations. 
We set $\beta=3.8$ and $K=6910$~km$^{-1}$ (which corresponds to the
COST Walfisch-Ikegami model for urban environment). 
The shadowing is modeled by a log-normal random
variable of expectation $1$ and  logarithmic standard deviation
$10\text{dB}$ (cf~\cite{hextopoi}) which makes $\E(S^{2/\beta})=0.516$. 
We assume noise power $-96\text{dBm}$ normalized by the base
station power  $62.2\text{dBm}$ which makes $W=10^{-15.82}.$ 
We consider two values for  the density of base stations:
$\lambda=4.619\text{km}^{-2}$, which corresponds to a ``urban''
network deployment and $\lambda=0.144\text{km}^{-2}$ for a ``suburban'' one.  
Figure~\ref{f.SINR} shows the distribution function of SINR 
from the strongest base station for both scenarios.
We validate our approach
by showing that the obtained results coincide with those of simulation
and a Laplace inversion method developed in~\cite{hextopoi}, with the latter
approach being less numerically stable and much more time-consuming.
We also plot the distribution of SIR in both scenarios (i.e. assuming
$W=0$). Both SIR curves coincide with that of the SINR in urban area, thus 
showing that for urban density of stations the network is
interference-limited, while for suburban density the impact of noise
is non-negligible. Finally, we provide curves regarding the model 
with fading (i.e. $1-\tilde P_c(T)$). We observe that the impact of
fading is non-negligible in both the urban and suburban scenario,
and stochastically decreases SINR (the respective distribution
functions are larger).

\section{Conclusion}
Cellular network models based on the Poisson
point process allow for analytic expression for many important 
characteristics. Complementing previous studies, 
in this paper we give tractable, integral expressions 
(without any Laplace transform inversion) for the distribution of
the SINR experienced by a typical user in the down-link channel 
from the $k$-th strongest base station in a single-tier cellular network.
Our signal propagation-loss model comprises of a power-law path-loss function with
arbitrarily distributed shadowing, with and without Rayleigh
fading.

\appendix


\subsection{Remaining proof of Theorem \ref{mainResult}}\label{mainResultProofPart2}
We introduce a change of variables
inspired by the $n$-dimensional spherical coordinates (for example, see
\cite[eq. (1.3)]{MUSTARD1964})
\begin{align*}
 s_1&:=u[\sin\theta_1 \sin\theta_2\dots\sin\theta_{n-1}]^{2/\beta}\\
s_2&:=u[\cos\theta_1 \sin\theta_2\dots\sin\theta_{n-1}]^{2/\beta}\\
s_3&:=u[\cos\theta_2 \sin\theta_3\dots\sin\theta_{n-1}]^{2/\beta}\\[-1ex]
&\cdots\\[-1.5ex]
s_n&:=u[\cos\theta_{n-1}]^{2/\beta}.
\end{align*}
Observe that $\sum_{i=1}^{n} s_i^{\beta}=u^{\beta} $ and 
$\prod_{i=1}^{n} s_i= 
u^n \left[\prod_{i=1}^{n}q_i \right]^{2/\beta} $,
where $q_i=q_i(\theta_i,\dots,\theta_{n-1}):=(s_i/u)^{\beta/2}$.
When $\beta=2$ our system of coordinates boils down to  
the  regular $n$-dimensional spherical coordinates, whose Jacobian is
$ \bar{J}(u,\theta_1,\dots,\theta_{n-1})
=u^{n-1} \prod_{i=1}^{n-1} \sin^{i-1}\theta_{i}$;
cf \cite[eq. (1.5)]{MUSTARD1964}).
By induction (or determinant properties and the chain rule) our coordinate
system has the corresponding Jacobian
\begin{align}
& J(u,\theta_1,\dots,\theta_{n-1})\\
&=\left(\frac{2}{\beta}\right)^{n-1}\bar{J}(u,
\theta_1,\dots,\theta_n)  \left[\prod_{i=1}^{n-1}\sin^{i}\theta_i\cos\theta_i
\right]^{2/\beta-1}\,,
\end{align}
which is clearly postive over the integration domain of interest.
Denote  $z:=W(a\Gamma(1-2/\beta))^{-\beta/2}$.
The integral in (\ref{intgralHn1}) becomes
\begin{align*}
&\int_0^{\infty}\!\!\!\!\!\int\limits_{[0,\pi/2]^{n-1}}\frac{
u^{n(\beta+1)} \left[\prod\limits_{i=1}^{n-1}\sin^{i}\theta_i\cos\theta_i
\right]^{2(\beta+1)/\beta} 
e^{-u^2} e^{-zu^{\beta}} }{u^{n\beta}
\prod\limits_{i=2}^n [q_i^2 +T_n 
] }  \\ & \times J(u,\theta_1,\dots,\theta_n) \,du\, d\theta_1\dots
d\theta_{n-1} \\
&=\left(\frac{2}{\beta}\right)^{n-1}\int_0^{\infty} u^{2n-1}  
e^{-u^2} e^{-zu^{\beta}} du \\ 
& \times\int\limits_{[0,\pi/2]^{n-1}}\frac{\prod\limits_{i=1}^{n-1} 
  \left[ \sin^{i}\theta_i\cos\theta_i \right]^{4/\beta+1} 
 [\sin\theta_{i}]^{i-1} }{\prod\limits_{i\neq j}[q_i^2 +T_n 
] }   d\theta_1\dots d\theta_{n-1} .
\end{align*}
The substitution $v_i=\sin^2 \theta_i$ makes the second integral (over
the hypercube) equal to $2^{1-n}\calJ_{n,\beta}(T_n)$, which, after defining $\eta_i$, completes the
proof in  view of~(\ref{intgralHn1}) and~(\ref{Tn}).

\pdfbookmark[0]{References}{References} 
{\small 
\bibliographystyle{IEEEtran}

\begin{thebibliography}{10}
\providecommand{\url}[1]{#1}
\csname url@rmstyle\endcsname
\providecommand{\newblock}{\relax}
\providecommand{\bibinfo}[2]{#2}
\providecommand\BIBentrySTDinterwordspacing{\spaceskip=0pt\relax}
\providecommand\BIBentryALTinterwordstretchfactor{4}
\providecommand\BIBentryALTinterwordspacing{\spaceskip=\fontdimen2\font plus
\BIBentryALTinterwordstretchfactor\fontdimen3\font minus
  \fontdimen4\font\relax}
\providecommand\BIBforeignlanguage[2]{{%
\expandafter\ifx\csname l@#1\endcsname\relax
\typeout{** WARNING: IEEEtran.bst: No hyphenation pattern has been}%
\typeout{** loaded for the language `#1'. Using the pattern for}%
\typeout{** the default language instead.}%
\else
\language=\csname l@#1\endcsname
\fi
#2}}

\bibitem{DHILLON2012}
H.~Dhillon, R.~Ganti, F.~Baccelli, and J.~Andrews, ``Modeling and analysis of
  {K}-tier downlink heterogeneous cellular networks,'' \emph{IEEE J. Sel. Areas
  Commun.}, vol.~30, no.~3, pp. 550--560, april 2012.

\bibitem{HABDF:2009}
M.~Haenggi, J.~Andrews, F.~Baccelli, O.~Dousse, and M.~Franceschetti,
  ``Stochastic geometry and random graphs for the analysis and design of
  wireless networks,'' \emph{IEEE J. Sel. Areas Commun.}, vol.~27, no.~7, pp.
  1029--1046, September 2009.

\bibitem{Piantanida2011}
C.~Galarza, P.~Piantanida, and M.~Kountouris, ``On the block error probability
  of finite-length codes in decentralized wireless networks,'' in \emph{Proc.
  of Allerton Conference}, sept. 2011, pp. 1195--1201.

\bibitem{ANDREWS2011}
J.~Andrews, F.~Baccelli, and R.~Ganti, ``A tractable approach to coverage and
  rate in cellular networks,'' \emph{IEEE Trans. Commun.}, vol.~59, no.~11, pp.
  3122 --3134, november 2011.

\bibitem{ChiaHan_etal2012}
C.-H. Lee, C.-Y. Shih, and Y.-S. Chen,
  ``\BIBforeignlanguage{English}{Stochastic geometry based models for modeling
  cellular networks in urban areas},''
  \emph{\BIBforeignlanguage{English}{Wireless Networks}}, pp. 1--10, 2012.

\bibitem{hextopoi}
B.~B{\l}aszczyszyn, M.~Karray, and H.~Keeler, ``Using {P}oisson processes to
  model lattice cellular networks,'' in \emph{Proc. of IEEE INFOCOM}, 2013,
  available also at~\url{http://arxiv.org/abs/1207.7208}.

\bibitem{AlouiniGoldsmith1999}
M.-S. Alouini and A.~Goldsmith, ``Area spectral efficiency of cellular mobile
  radio systems,'' \emph{IEEE Trans. Veh. Technol.}, vol.~48, no.~4, pp.
  1047--1066, jul 1999.

\bibitem{BonaldBorstHegdeJP2009}
T.~Bonald, S.~Borst, N.~Hegde, and M.~Jonckheere, ``Flow-level performance and
  capacity of wireless networks with user mobility,'' \emph{Queueing Systems},
  vol.~63, pp. 131--164, 2009.

\bibitem{Richter2009}
F.~Richter, A.~Fehske, and G.~Fettweis, ``Energy efficiency aspects of base
  station deployment strategies for cellular networks,'' in \emph{Proc. of IEEE
  VTC}, sept. 2009.

\bibitem{shadow_jnl}
B.~B{\l}aszczyszyn and M.~K. Karray, ``Quality of service in wireless cellular
  networks subject to log-normal shadowing,'' \emph{IEEE Trans. Commun.},
  vol.~61, no.~2, pp. 781--791, 2013, published on Early Access on December
  2012.

\bibitem{DHILLON2011}
H.~Dhillon, R.~Ganti, F.~Baccelli, and J.~Andrews, ``Coverage and ergodic rate
  in {K}-tier downlink heterogeneous cellular networks,'' in \emph{Proc. of
  Allerton Conference}, sept. 2011, pp. 1627 --1632.

\bibitem{VU2012}
T.~T. Vu, L.~Decreusefond, and P.~Martins, ``An analytical model for evaluating
  outage and handover probability of cellular wireless networks,'' in
  \emph{Proc of WPMC}, sept. 2012, pp. 643 --647.

\bibitem{Borst2003}
S.~Borst, ``User-level performance of channel-aware scheduling algorithms in
  wireless data networks,'' in \emph{Proc. of IEEE INFOCOM}, 2003, pp.
  321--331.

\bibitem{Fel68}
W.~Feller, \emph{An Introduction to Probability Theory and its Applications,
  vol I.}, 3rd~ed.\hskip 1em plus 0.5em minus 0.4em\relax New York: J. Wiley \&
  Sons, 1968.

\bibitem{GERBER:1995}
H.~U. Gerber, \emph{Life Insurance Mathematics}.\hskip 1em plus 0.5em minus
  0.4em\relax Springer, 1995.

\bibitem{FnT1}
F.~Baccelli and B.~B{\l}aszczyszyn, \emph{Stochastic Geometry and Wireless
  Networks, Volume I --- Theory}, ser. Foundations and Trends in
  Networking.\hskip 1em plus 0.5em minus 0.4em\relax NoW Publishers, 2009, vol.
  3, No 3--4.

\bibitem{DLMF}
\BIBentryALTinterwordspacing
(2012, {Accessed on the 10th of September}) {Digital Library of Mathematical
  Functions}. National Institute of Standards and Technology. Release 1.0.5 of
  2012-10-01. [Online]. Available: \url{http://dlmf.nist.gov/}
\BIBentrySTDinterwordspacing

\bibitem{ULANSKII:2006}
E.~Ulanskii, ``An identity for generalizing the hypergeometric integral,''
  \emph{Mathematical Notes}, vol.~79, no.~5, pp. 741--744, 2006.

\bibitem{paul_matlab}
\BIBentryALTinterwordspacing
H.~P. Keeler, ``{SINR}-based $k$-coverage probability in cellular networks,''
  {MATLAB} Central File Exchange, 2013. [Online]. Available:
  \url{http://www.mathworks.fr/matlabcentral/fileexchange/40087-sinr-based-k-c%
overage-probability-in-cellular-networks}
\BIBentrySTDinterwordspacing

\bibitem{kuo2005lifting}
F.~Kuo and I.~Sloan, ``Lifting the curse of dimensionality,'' \emph{Notices of
  the AMS}, vol.~52, no.~11, pp. 1320--1328, 2005.

\bibitem{MUSTARD1964}
D.~Mustard, ``\BIBforeignlanguage{English}{Numerical integration over the
  n-dimensional spherical shell},''
  \emph{\BIBforeignlanguage{English}{Mathematics of Computation}}, vol.~18,
  no.~88, pp. pp. 578--589, 1964.

\end{thebibliography}

}
\end{document}